\numberwithin{equation}{section}
\theoremstyle{plain} \newtheorem{thm}{Theorem}[section]
\theoremstyle{plain} \newtheorem{define}{Definition}[section]
\theoremstyle{plain} \newtheorem{example}{Example}[section]
\theoremstyle{plain} \newtheorem{remark}{Remark}[section]
\theoremstyle{plain} \newtheorem{cor}{Corollary}[section]
\newcommand{\nn}{\nonumber} 
\newcommand{\opname}{Walsh}
\newcommand{\opnamealt}{Hadamard}
\newcommand{\opnamecoeff}{\opname{} coefficients}
\newcommand{\wt}{\opname{} transform}
\newcommand{\wmat}{W}
\newcommand{\asymbol}{m}
\newcommand{\zsymbol}{z}
\newcommand{\mawt}[1]{\if#1a\wmat\else\if#10Z\else
    \errmessage{Bad matrix}\fi\fi}
\newcommand{\innerp}[2]{\left< {#1},{#2} \right>}
\newcommand{\na}[1]{t_{#1}}
\newcommand{\al}[1]{a_{#1}}
\newcommand{\xl}[2]{{#2}_{#1}}
\renewcommand{\int}[1]{\pi_{#1}}
\newcommand{\nl}{n}
\newcommand{\nlp}{n + 1}
\newcommand{\lo}{i}
\newcommand{\gtype}{\al{1}|\al{2}|\cdots|\al{\nl}}
\newcommand{\ngtype}{\na{1}\cdots\na{\nl}}
\newcommand{\zerogtype}{0|0|\ldots|0}
\newcommand{\xgtype}[1]{\xl{1}{#1}|\xl{2}{#1}|\cdots|\xl{\nl}{#1}}
\newcommand{\xngtype}[2]{\xl{1}{#1}|\xl{2}{#1}|\cdots|\xl{#2}{#1}}
\newcommand{\brokengtype}[2]{\xl{1}{#1}|\cdots|\xl{{#2}-1}{#1}
    |\xl{{#2}+1}{#1}|\cdots|\xl{\nl}{#1}}
\renewcommand{\vec}[1]{\;[#1]}
\newcommand{\sumogtypes}{\sum_{\gtype}}
\newcommand{\intvl}[2]{[\,#1, #2)}
\newcommand{\phenoscape}{fitness landscape}
\newcommand{\C}{\mathbb{C}}
\newcommand{\Csp}{\C^{\na{1},\na{2},\ldots,\na{\nl}}}
\newcommand{\sillyname}{}
\newcommand{\wronomega}{\sillyname{}\omega}
\newcommand{\wronmu}{\sillyname{}\mu}
\newcommand{\wrono}{\omega}
\newcommand{\wronm}{\mu}
\newcommand{\wronoa}[2]{\omega_{{#1},{#2}}}
\newcommand{\wronma}[2]{\mu_{{#1},{#2}}}
\newcommand{\wmatelt}{(-1)^{i_1j_1
    + i_2j_2 + \cdots + i_nj_n}}
\newcommand{\mawtaDef}[1]{
    \sum_{\xgtype{#1}} \prod_{\lo = 1}^{\nl}\;
    (1 - \na{\lo}\wronoa{\al{\lo}}{\xl{i}{b}})
    \vec{\xgtype{#1}}}
\newcommand{\mawtzDef}[1]{
    \sum_{\xgtype{#1}} \prod_{\lo = 1}^{\nl}\;
    \wronma{\al{\lo}}{\xl{i}{b}}
    \vec{\xgtype{#1}}}
\newcommand{\mawtaElt}{
    \prod_{\lo = 1}^{\nl}\;
    (1 - \na{\lo}\wronoa{\al{\lo}}{\xl{i}{b}})}
\newcommand{\mawtaElta}[3]{
    \prod_{\lo = 1}^{#3}\;
    (1 - \na{\lo}\wronoa{\xl{i}{#1}}{\xl{i}{#2}})}
\newcommand{\mawtzElt}{
    \prod_{\lo = 1}^{\nl}\;
    \wronma{\al{\lo}}{\xl{i}{b}}}
\newcommand{\mawtzElta}[3]{
    \prod_{\lo = 1}^{#3}\;
    \wronma{\xl{i}{#1}}{\xl{i}{#2}}}
\newcommand{\raisebf}[2]{{\vec{#1}}^{\textbf{\if%
    m#2w\else\if
    \asymbol#2w\else\if
    \zsymbol#2z\else\if%
    z#2z\else
    \errmessage{Bad superscript}\fi\fi\fi\fi}}}
\renewcommand{\deg}[1]{{\rm deg}(#1)}
\newcommand{\Additive}{Additive}
\newcommand{\additive}{additive}
\newcommand{\Span}{{\rm span}}
\newcommand{\final}{final}
\newenvironment{case}[1]%
{\noindent\it Case: {#1}.}
{\par\bigskip}
\author{Devin Greene}
\affil{American University, 4400 Massachusetts Avenue, NW, Washington, DC, United States, 20016}
\title{Multiallelic \wt{}s}
\date{}
\begin{document}

\maketitle

\begin{abstract}
\noindent
A closed formula multiallelic \opname{} (or \opnamealt{}) transform is
introduced.  Basic results are derived, and a statistical interpretation
of some of the resulting linear forms is discussed.
\end{abstract}

\section{Introduction}
The \opname{}, or \opnamealt{} matrix, is used in evolutionary biology
to detect and quantify epistasis, i.e.\ interactions, among the effects
of alleles over different loci.  However, applications using the
\wt{} are restricted to \emph{biallelic} systems: those
where each locus has two alleles.  Here we develop two versions of the
\opname{} matrix which described any multilocus, multiallele system.

Various authors have attempted to defined multiallelic analogs of the
\wt{}: See, for instance, \cite{faure2023extension},
\cite{metzger2023epistasis}.  In \cite{faure2023extension}, the authors
develop a recursive \wt{} which is closely related to the one we develop
here.  However, our approach uses closed formulas to describe the two
related versions of the transform. Furthermore, we emphasize the linear
forms which arise from these matrices, which generalize the notion of
\opnamecoeff{} introduced to evolutionary biology in
\cite{WEINREICH2013700}.

\section{Preliminaries}

Consider a genotype with $\nl$ loci, numbered 1 through $\nl$. Assume
that each locus $\lo$ has $\na{\lo} \ge 1$ possible alleles.  We label
the alleles of a particular locus $0, 1, 2,\ldots, \na{\lo} - 1$. If we
use the finite integer segment notation, i.e.

\begin{equation*}
\intvl{a}{b} = \{a, a + 1, \ldots, b - 1\},
\end{equation*}
where $a$, $b$ are integers,
then the set of alleles at the $\lo$th locus is $\intvl{0}{\na{\lo}}$.

We denote a genotype $g$ as follows.  Let $\al{\lo}$ be the allele at
the $\lo$th locus of $g$.  Then

\begin{equation*}
g = \gtype
\end{equation*}
Note that the set of genotypes is naturally equivalent to the
Cartesian product

\begin{equation*}
\intvl{0}{\na{1}} \times \cdots \times \intvl{0}{\na{n}}
\end{equation*}
This property will be fundamental for our sum-taking manipulations in
this paper.  For example, for any locus $i$, the summation
\begin{equation*}
\sumogtypes
\end{equation*}
can be replaced with
\begin{equation}\label{cartesianSum}
\sum_{\brokengtype{a}{i}} \sum_{a_i}
\end{equation}
Furthermore, if $f_{j,i}$ are terms in a product, where $f_{j,i}$
depends only on the locus $i$ and allele $j$, then
\begin{equation*}
\sumogtypes \prod_{i = 1}^\nl f_{a_i,i}
\end{equation*}
can be rewritten as
\begin{equation}\label{cartesianProd}
\sum_{\brokengtype{a}{k}}\prod_{0 \le i \ne k \le n}f_{a_i,i}
    \sum_{a_k}f_{a_k,k}
\end{equation}

The replacements \eqref{cartesianSum} and \eqref{cartesianProd} will be
used without comment for the remainder of this paper.

\bigskip

We stress that the labelling of alleles should not be interpreted to
imply any ranking or ordinality among those in the same locus.  We could
have labelled them with any set of symbols.  However, for the sake of
convenience in mathematical notation, we use integers.  Nevertheless, in
this paper's development, the allele in each locus labeled ``0'' will play a
special role as a ``base'' or ``anchor'' point.
Though we will not address it in this article, it can be shown that the
choice of an ``anchor'' allele at each locus does not in any fundamental
way treat such alleles as special or distinct from the others.

We define the set of \emph{\phenoscape{}s} to be the finite dimensional
Hilbert space $\Csp{}$ with the standard component-wise inner product,
where the components are indexed by the set of genotypes.  The standard
basis vector for genotype $\gtype{}$ is denoted using square brackets,
e.g.\ as $\vec{\gtype{}}$.

\section{Two dual multiallelic \wt{}s}

The \opname{}, or \opnamealt{} matrix, is a $2^{\nl}$ by $2^{\nl}$ matrix with
rows and columns indexed by bit strings of length $\nl$, whose
$i_1\cdots%
i_n, j_1\cdots%
j_n$th matrix element can be defined by the formula

\begin{equation*}
\wmatelt
\end{equation*}
The \opname{} matrix has been used to measure interactions among loci in
biallelic systems.  Here, we introduce two linear transformations which
generalize to an arbitrary number of alleles per locus.

\bigskip
Before presenting the definitions, we define two functions: the
$\wronomega$ function and the $\wronmu$ function.

The $\wronomega$ function is defined on pairs of elements in a set with
a unique zero element as

\begin{equation*}\label{wronOmega}
\omega_{x, y} =
\begin{cases}
1 & \text{if $x \ne 0$ and $x = y$}\\
0 & \text{otherwise}
\end{cases}
\end{equation*}
The $\wronmu$ function is defined on pairs of elements in a similar type
of set.
\begin{equation}\label{wronMu}
\mu_{x, y} =
\begin{cases}
1 & \text{if $x = 0$ or $y = 0$, otherwise}\\
-1 & \text{if $x = y$}\\
0 & \text{otherwise}
\end{cases}
\end{equation}
We remark that both of these functions are symmetric with respect to
their arguments, i.e.\ $\wronoa{x}{y} = \wronoa{y}{x}$ and $\wronma{x}{y} =
\wronma{y}{x}$.

\bigskip
In the following definition, recall that $\na{1},\ldots,\na{n}$ are the
number of alleles at loci 1,2,...,$n$.
\begin{define}\label{definition}
We define the linear transformations $\mawt{a}$ and $\mawt{0}$ in $\Csp$
with respect to the standard basis as follows.

\begin{equation}\label{mawtA}
\mawt{a} \vec{\gtype} = \mawtaDef{b}
\end{equation}
\begin{equation}\label{mawt0}
\mawt{0} \vec{\gtype} = \mawtzDef{b}
\end{equation}
\end{define}

\noindent
For convenience, we adopt the following superscript notation:
\begin{align*}
\raisebf{{\gtype}}{\asymbol} &= \mawt{a} \vec{\gtype}\\
\raisebf{{\gtype}}{\zsymbol} &= \mawt{0} \vec{\gtype}\\
\end{align*}

\begin{example}\label{example1stOrder}
To understand the motivation behind Definition \ref{definition},
consider the vectors $\raisebf{2|0}{\asymbol}$ and
$\raisebf{2|0}{\zsymbol}$ in $\C^{3,3}$.
From \eqref{mawtA} and \eqref{mawt0} above,

\begin{align}\label{mawtAExample}
\raisebf{2|0}{\asymbol} &=
\vec{0|0} + \vec{0|1} + \vec{0|2} + \vec{1|0} - 2\vec{2|0}\\
&+ \vec{1|1} + \vec{1|2} - 2\vec{2|1} - 2\vec{2|2}\nn
\end{align}
\begin{align}
\raisebf{2|0}{\zsymbol} &=
\vec{0|0} + \vec{0|1} + \vec{0|2} - \vec{2|0} - \vec{2|1} - \vec{2|2}
\label{mawt0Example}
\end{align}
If we take the inner product of a \phenoscape{} $v$ with each
vector above, we obtain the following linear forms.

\begin{equation}\label{mawtAExampleForm}
\innerp{v}{\raisebf{2|0}{\asymbol}} =
v_{00} + v_{01} + v_{02} + v_{10} - 2v_{20} + v_{11} + v_{12} - 2v_{21} - 2v_{22}
\end{equation}
\begin{equation}\label{mawt0ExampleForm}
\innerp{v}{\raisebf{2|0}{\zsymbol}} =
v_{00} + v_{01} + v_{02} - v_{20} - v_{21} - v_{22}
\end{equation}

We can rewrite \eqref{mawtAExampleForm}, up to a scalar multiple, as
$$
\frac{
(v_{20} - \frac{v_{00} + v_{10}}{2})
+ (v_{21} - \frac{v_{01} + v_{11}}{2})
+ (v_{22} - \frac{v_{02} + v_{12}}{2})
}{3}
$$
This is readily interpreted as being the average effect of replacing the
allele at the first locus with `2' as compared to the average of the
alternatives.

Rewriting \eqref{mawt0ExampleForm}, we obtain, up a scalar multiple,

\begin{equation*}
\frac{
(v_{20} - v_{00}) + (v_{21} - v_{01}) + (v_{22} - v_{02})
}{3}
\end{equation*}
which can be interpreted as the average effect of replacing `0' in the
first locus with `2'.
\end{example}

\begin{example}\label{example0thOrder}
The formulas \eqref{mawtA} and \eqref{mawt0} give the same vector at
\break$\vec{\zerogtype}$, indeed

\begin{equation*}
\raisebf{\zerogtype}{\asymbol} = \raisebf{\zerogtype}{\zsymbol}
= \sum_{\gtype} \vec{\gtype}
\end{equation*}
Thus if $v$ is a \phenoscape{} in $\Csp$, then

\begin{align*}
\frac{1}{\na{1}\cdots\na{\nl}}\innerp{v}{\raisebf{\zerogtype}{\asymbol}}
&= \frac{1}{\na{1}\cdots\na{\nl}}\innerp{v}{\raisebf{\zerogtype}{\zsymbol}}\\
&= \text{the average over all of $v$'s components}
\end{align*}
\end{example}

The following theorem constitutes the main result of this article.

\begin{thm}\label{mainTheorem}
\hspace{-1em}.

\begin{enumerate}
\item\label{main1}
$\mawt{a}$ and $\mawt{0}$ are self-adjoint.
\item\label{main2}

$\mawt{a}$ and $\mawt{0}$ are scaled inverses of each other.  Explicitly
\begin{equation}\label{scaledInverse}
\mawt{0}\mawt{a} = \mawt{a}\mawt{0} = \ngtype I
\end{equation}
\item\label{main3}

$\mawt{a}^2$ and $\mawt{0}^2$ are positive definite and block diagonal
with matching block subspaces. Explicitly

\begin{multline}
(\mawt{a}^2)_{\gtype, \xgtype{b}}\\ = \prod_{i=1}^\nl
\na{i}\left(\delta_{a_i,0}\delta_{b_i,0}
+ (1 - \delta_{a_i,0})(1 - \delta_{b_i,0})
    (\na{i}\delta_{a_i,b_i} - 1)\right)\hfil\label{waSquared}
\end{multline}
and
\begin{multline}
(\mawt{0}^2)_{\gtype, \xgtype{b}}\\ = \prod_{i = 1}^\nl
\left(\na{i} \delta_{a_i,0}\delta_{b_i,0}
+ (1 - \delta_{a_i,0})(1 - \delta_{b_i,0})(1 +
\delta_{a_i,b_i})\right)\hfil\label{w0Squared}
\end{multline}

\end{enumerate}
\end{thm}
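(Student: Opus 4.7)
The plan is to exploit the fact that both matrix definitions factor over loci: $(\mawt{a})_{\gtype,\xgtype{b}} = \prod_{\lo=1}^{\nl}(1 - \na{\lo}\wronoa{\al{\lo}}{\xl{\lo}{b}})$ and $(\mawt{0})_{\gtype,\xgtype{b}} = \prod_{\lo=1}^{\nl} \wronma{\al{\lo}}{\xl{\lo}{b}}$, so each operator is a Kronecker product of $\nl$ single-locus operators. Every entrywise claim in the theorem then reduces, via the factorization rule \eqref{cartesianProd}, to a one-locus identity that can be verified by a short case analysis on whether the relevant alleles are zero. Part \ref{main1} is in fact immediate from this factorization: the symmetry of $\wrono$ and $\wronm$ in their arguments, noted just after \eqref{wronMu}, makes every entry invariant under swapping $\gtype \leftrightarrow \xgtype{b}$, and the entries are real, so both operators are self-adjoint.

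For Part \ref{main2}, I would write $(\mawt{0}\mawt{a})_{\gtype,\xgtype{c}} = \prod_{\lo=1}^{\nl} S_\lo$ with $S_\lo = \sum_{j} \wronma{\al{\lo}}{j}(1 - \na{\lo}\wronoa{j}{\xl{\lo}{c}})$, reducing the claim to $S_\lo = \na{\lo}\delta_{\al{\lo},\xl{\lo}{c}}$. If $\al{\lo} = 0$, then $\wronma{0}{j} = 1$ for every $j$, so $S_\lo = \na{\lo} - \na{\lo}\sum_j \wronoa{j}{\xl{\lo}{c}} = \na{\lo}\delta_{\xl{\lo}{c},0}$. If $\al{\lo} \ne 0$, only $j = 0$ and $j = \al{\lo}$ contribute to the $\wronm$-factor, and the sum telescopes to $\na{\lo}\wronoa{\al{\lo}}{\xl{\lo}{c}}$, which equals $\na{\lo}\delta_{\al{\lo},\xl{\lo}{c}}$ under the assumption $\al{\lo}\ne 0$. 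The reverse product $\mawt{a}\mawt{0}$ then follows either from self-adjointness or from the symmetric calculation.

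For Part \ref{main3}, the same tensor decomposition expresses $(\mawt{a}^2)_{\gtype,\xgtype{c}}$ and $(\mawt{0}^2)_{\gtype,\xgtype{c}}$ as products of per-locus sums $\sum_j (1 - \na{\lo}\wronoa{\al{\lo}}{j})(1 - \na{\lo}\wronoa{j}{\xl{\lo}{c}})$ and $\sum_j \wronma{\al{\lo}}{j}\wronma{j}{\xl{\lo}{c}}$. A four-way case split on the vanishing of $\al{\lo}$ and $\xl{\lo}{c}$ yields exactly the per-locus factors in \eqref{waSquared} and \eqref{w0Squared}; the only mildly delicate case is when both are nonzero, where the identity $\sum_j \wronoa{\al{\lo}}{j}\wronoa{j}{\xl{\lo}{c}} = \delta_{\al{\lo},\xl{\lo}{c}}$ combines with three other cross-terms to give $\na{\lo}(\na{\lo}\delta_{\al{\lo},\xl{\lo}{c}}-1)$. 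Block diagonality is then manifest, since an entry vanishes unless $\al{\lo}$ and $\xl{\lo}{c}$ share a common support pattern (simultaneously zero or simultaneously nonzero at every $\lo$), so the block subspaces are indexed by subsets of $\{1,\ldots,\nl\}$. Positive definiteness is then free: Part \ref{main1} makes each matrix a Hermitian square, hence positive semidefinite, and Part \ref{main2} forces invertibility. The main obstacle is not conceptual at all; once the tensor-product structure does its work, what remains is bookkeeping across the four parity cases, with the ``both nonzero'' computation for $\mawt{a}^2$ being the only one that requires any visible cancellation.
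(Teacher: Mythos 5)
Your proposal is correct and takes essentially the same approach as the paper's proof: where you factor each matrix product directly into per-locus sums via the Kronecker structure, the paper arrives at the same per-locus sums by induction on $\nl$, peeling off the last locus using \eqref{cartesianProd}. The single-locus case analyses for Parts 2 and 3 and the positive-definiteness argument (Hermitian square plus invertibility from the scaled-inverse relation) match the paper's exactly.
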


\begin{proof}
We start with the matrix representations of $\mawt{a}$ and $\mawt{0}$
with respect to the standard basis.  (Strictly speaking, since
Definition \ref{definition} is a vector definition, we should
distinguish between left acting matrices and right acting matrices.
However, since $\wrono{}$ and $\wronm{}$ are symmetric on their
arguments, the distinction is irrelevant, and all the more so when we
have shown that the matrices are self-adjoint.)

\begin{equation*}
(\mawt{a})_{\gtype,\,\xgtype{b}}
= \mawtaElt
\end{equation*}
\begin{equation*}
(\mawt{0})_{\gtype,\,\xgtype{b}}
= \mawtzElt
\end{equation*}
Part \ref{main1} then follows immediately from the symmetry of $\wrono{}$
and $\wronm{}$ as was just alluded to.

We use induction for Part \ref{main2}.  The case $\nl = 0$ is trivial.
Assume validity at $\nl$, and consider the matrix element of the product
of $\mawt{a}$ and $\mawt{0}$ at $\nl + 1$:

\begin{equation}\label{WZProduct}
\sum_{\xngtype{c}{\nl + 1}}
    \mawtaElta{a}{c}{\nl + 1}\mawtzElta{c}{b}{\nl + 1}
\end{equation}
We write \eqref{WZProduct} as
\begin{equation*}
\sum_{\xgtype{c}}
\prod_{i = 1}^\nl (1 - \na{i}\wronoa{a_i}{c_i})\wronma{c_i}{b_i}
\sum_{c_{\nlp}}
(1 - \na{\nlp} \wronoa{a_{\nlp}}{c_{\nlp}})\wronma{c_{\nlp}}{b_{\nlp}}
\\
\end{equation*}
For the remainder of this part of the proof, the focus will be on the sum

\begin{equation}\label{PartSum}
\sum_{c_{\nlp}}
(1 - \na{\nlp} \wronoa{a_{\nlp}}{c_{\nlp}})\wronma{c_{\nlp}}{b_{\nlp}}
\end{equation}

\begin{case}{$a_{\nlp} = b_{\nlp} = 0$}
In this case
$\wronoa{a_{\nlp}}{c_{\nlp}} = 0$
and
$\wronma{c_{\nlp}}{b_{\nlp}} = 1$ for all
$c_{\nlp}$. Thus each summand in \eqref{PartSum} is 1, and so the total sum is
$\na{\nlp}$.  The desired result follows by induction.
\end{case}

\begin{case}{$a_{\nlp} = b_{\nlp} \ne 0$}
When $c_{\nlp} = 0$,
$\wronoa{a_{\nlp}}{c_{\nlp}} = 0$
and
$\wronma{c_{\nlp}}{b_{\nlp}} = 1$,
in which case the summand is 1.  When
$c_{\nlp} = a_{\nlp} = b_{\nlp}$,
$\wronoa{a_{\nlp}}{c_{\nlp}} = 1$
and
$\wronma{c_{\nlp}}{b_{\nlp}} = -1$, thus the summand is $\na{\nlp} - 1$.
For other values of $c_{\nlp}$,
$\wronma{c_{\nlp}}{b_{\nlp}} = 0$,
thus the summand is 0. It follows that the sum \eqref{PartSum} is
$\na{\nlp}$. The result follows by induction.
\end{case}

\begin{case}{$a_{\nlp} \ne b_{\nlp} = 0$}
In this case
$\wronma{c_{\nlp}}{b_{\nlp}} = 1$
for all $c_{\nlp}$, and so the sum \eqref{PartSum} can be expressed as
$\sum_{c_{\nlp}} (1 - \na{\nlp}\wronoa{a_{\nlp}}{c_{\nlp}})$.
If $c_{\nlp} \ne a_{\nlp}$, then
$\wronoa{c_{\nlp}}{a_{\nlp}} = 0$, in which case the summand is 1.
If
$c_{\nlp} = a_{\nlp}$, then
$\wronoa{c_{\nlp}}{a_{\nlp}} = 1$,
and the summand is $1 - \na{\nlp}$.
The total sum is then 0, as it should be since
$\xngtype{a}{\nlp} \ne \xngtype{b}{\nlp}$
\end{case}

\begin{case}{$b_{\nlp} \ne a_{\nlp} = 0$}
In this case
$\wronoa{c_{\nlp}}{a_{\nlp}} = 0$
for all $c_{\nlp}$, so \eqref{PartSum} can be expressed as
$\sum_{c_{\nlp} = 0}^{\na{\nlp} - 1} \wronma{c_{\nlp}}{b_{\nlp}}$.
$\wronma{c_{\nlp}}{b_{\nlp}} = 1$ when $c_{\nlp} = 0$, and
$\wronma{c_{\nlp}}{b_{\nlp}} = -1$ when $c_{\nlp} = b_{\nlp}$.
Otherwise, $\wronma{c_{\nlp}}{b_{\nlp}} = 0$.
Thus the sum \eqref{PartSum} is 0, as it should be since
$\xngtype{a}{\nlp} \ne \xngtype{b}{\nlp}$
\end{case}

\begin{case}{$a_{\nlp} \ne b_{\nlp}$, and $a_{\nlp},b_{\nlp} \ne 0$}
In this case when $c_{\nlp} = 0$,
$\wronoa{a_{\nlp}}{c_{\nlp}} = 0$
and $\wronma{c_{\nlp}}{b_{\nlp}} = 1$
thus the summand is 1.
When $c_{\nlp} = a_{\nlp}$
then $\wronoa{a_{\nlp}}{c_{\nlp}} = 1$ and
$\wronma{c_{\nlp}}{b_{\nlp}} = 0$,
thus the summand is 0.
When $c_{\nlp} = b_{\nlp}$
then $\wronoa{a_{\nlp}}{c_{\nlp}} = 0$ and
$\wronma{c_{\nlp}}{b_{\nlp}} = -1$
thus the summand is $-1$.
Thus the sum \eqref{PartSum} totals to 0, as it should since
$\xngtype{a}{\nlp} \ne \xngtype{b}{\nlp}$
\end{case}
\noindent
This completes the proof for Part \ref{main2}.

\bigskip

For Part \ref{main3}, $\mawt{a}^2$ is positive definite since $\mawt{a}$
is invertible, and $\mawt{a}\mawt{a}^* = \mawt{a}^2$.  Similarly for
$\mawt{0}^2$.  The claim respecting block diagonality follows from the
formulas \eqref{waSquared} and \eqref{w0Squared}, which we now prove by
induction on $\nl$.  For both $\mawt{0}$ and $\mawt{a}$, the case where
$\nl
= 0$ is trivial.  Assume validity at $\nl$, and consider a matrix
element of $\mawt{a}$ at $\nlp$:

\begin{multline}\label{inductionW}
(\mawt{a}^2)_{\xngtype{a}{\nlp},\xngtype{b}{\nlp}}\\
= \sum_{\xngtype{c}{\nlp}}
    \prod_{i=1}^{\nlp}(1 - \na{i}\wronoa{a_i}{c_i})
    (1 - \na{i}\wronoa{c_i}{b_i})\hfil\\
= \sum_{\xgtype{c}}
    \prod_{i = 1}^{\nl}(1 - \na{i}\wronoa{a_i}{c_i})
    (1 - \na{i}\wronoa{c_i}{b_i})\hfil\\
    \times \sum_{c_{\nlp}}
    (1 - \na{\nlp}\wronoa{a_{\nlp}}{c_{\nlp}})
    (1 - \na{\nlp}\wronoa{c_{\nlp}}{a_{\nlp}})\hfil
\end{multline}
We focus on the term

\begin{equation}\label{WTerm}
\sum_{c_{\nlp}}
    (1 - \na{\nlp}\wronoa{a_{\nlp}}{c_{\nlp}})
    (1 - \na{\nlp}\wronoa{c_{\nlp}}{b_{\nlp}})
\end{equation}
By induction, in each case that follows, it suffices to show
that \eqref{WTerm} is equivalent to the expression

\begin{equation}\label{targetW}
\na{\nlp}\left(\delta_{a_i,0}\delta_{b_i,0}
+ (1 - \delta_{a_i,0})(1 - \delta_{b_i,0})
    (\na{\nlp}\delta_{a_i,b_i} - 1)\right)
\end{equation}

\begin{case}{$a_i = b_i = 0$}
In this case \eqref{WTerm} is $\na{\nlp}$, as is \eqref{targetW}.
\end{case}
\begin{case}{$a_i = b_i \ne 0$}
In this case \eqref{WTerm} is
\begin{equation*}
(1 - \na{\nlp})^2 + \na{\nlp} - 1 = (\na{\nlp} - 1)\na{\nlp}
\end{equation*}
as is \eqref{targetW}
\end{case}
\begin{case}{$a_i \ne b_i = 0$ or $b_i \ne a_i = 0$}
In this case \eqref{WTerm} is
\begin{equation*}
(1 - \na{\nlp}) + (\na{\nlp} - 1) = 0
\end{equation*}
as is \eqref{targetW}
\end{case}
\begin{case}{$a_i \ne b_i$, $a_i, b_i \ne 0$}
In this case \eqref{WTerm} is
\begin{equation*}
2(1 - \na{\nlp}) + (\na{\nlp} - 2) = -\na{\nlp}
\end{equation*}
as is \eqref{targetW}
\end{case}

\noindent
Now consider the matrix element of $\mawt{0}$ at $\nlp$
\begin{multline}\label{inductionA}
(\mawt{0}^2)_{\xngtype{a}{\nlp},\xngtype{b}{\nlp}}\\
= \sum_{\xngtype{c}{\nlp}}
    \prod_{i=1}^{\nlp}\wronma{a_i}{c_i}
    \wronma{c_i}{b_i}\hfil\\
= \sum_{\xgtype{c}}
    \prod_{i = 1}^{\nl}\wronma{a_i}{c_i}
    \wronma{c_i}{b_i}\hfil\\
    \times \sum_{c_{\nlp}}
    \wronma{a_{\nlp}}{c_{\nlp}}
    \wronma{c_{\nlp}}{a_{\nlp}}\hfil
\end{multline}
Similarly to how we handled $\mawt{a}$, we focus on the term

\begin{equation}\label{Aterm}
\sum_{c_{\nlp}}
    \wronma{a_{\nlp}}{c_{\nlp}}
    \wronma{c_{\nlp}}{a_{\nlp}}
\end{equation}
and show by induction that each case has \eqref{Aterm} equal to
\begin{equation}\label{targetA}
\na{i} \delta_{a_i,0}\delta_{b_i,0}
+ (1 - \delta_{a_i,0})(1 - \delta_{b_i,0})(1 +
\delta_{a_i,b_i})
\end{equation}

\begin{case}{$a_i = b_i = 0$}
In this case \eqref{Aterm} is $\na{\nlp}$, as is
\eqref{targetA}.
\end{case}
\begin{case}{$a_i = b_i \ne 0$}
In this case \eqref{Aterm} is 2, as is \eqref{targetA}.
\end{case}
\begin{case}{$a_i \ne b_i = 0$ or $b_i \ne a_i = 0$}
In this case \eqref{Aterm} is 0, as is \eqref{targetA}.
\end{case}
\begin{case}{$a_i \ne b_i$, $a_i, b_i \ne 0$}
In this case \eqref{Aterm} is 1, as is \eqref{targetA}.
\end{case}

\end{proof}

For the remainder of this paper, we will identify $\mawt{a}$ and
$\mawt{0}$ with their matrix representations on the standard basis.

\begin{cor}\label{coeffFinder}
The sets

\begin{equation*}
\{\raisebf{\gtype}{\asymbol}\}_{\gtype}\\
\end{equation*}
and
\begin{equation*}
\{\raisebf{\gtype}{\zsymbol}\}_{\gtype}
\end{equation*}
are bases for $\Csp$.
Furthermore, if $v$ is a \phenoscape{}, and the coefficients
in the first basis are $c_{\gtype}$, i.e.\

\begin{equation*}
v = \sum_{\gtype} c_{\gtype} \raisebf{\gtype}{m},
\end{equation*}
then

\begin{equation*}
c_{\gtype} = \frac{1}{\na{1}\cdots\na{\nl}}\sum_{\xgtype{b}}
\mawt{0}_{\gtype, \xgtype{b}} v_{\xgtype{b}}
\end{equation*}

If $s_{\gtype}$ are the coefficients in the second basis, i.e.\

\begin{equation*}
v = \sum_{\gtype} s_{\gtype} \raisebf{\gtype}{z}
\end{equation*}
then

\begin{equation*}
s_{\gtype} = \frac{1}{\na{1}\cdots\na{\nl}}\sum_{\xgtype{b}}
\mawt{a}_{\gtype, \xgtype{b}} v_{\xgtype{b}}
\end{equation*}
\end{cor}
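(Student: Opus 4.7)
The plan is to deduce the corollary as a direct consequence of Theorem \ref{mainTheorem}, Part \ref{main2}, which asserts that $\mawt{0}\mawt{a} = \mawt{a}\mawt{0} = \na{1}\cdots\na{\nl}\,I$. This identity tells us that both $\mawt{a}$ and $\mawt{0}$ are invertible with
\[
\mawt{a}^{-1} = \frac{1}{\na{1}\cdots\na{\nl}}\mawt{0}, \qquad
\mawt{0}^{-1} = \frac{1}{\na{1}\cdots\na{\nl}}\mawt{a}.
\]

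First I would note that by Definition \ref{definition}, $\raisebf{\gtype}{\asymbol} = \mawt{a}\vec{\gtype}$ and $\raisebf{\gtype}{\zsymbol} = \mawt{0}\vec{\gtype}$. Since the standard basis $\{\vec{\gtype}\}_{\gtype}$ is a basis for $\Csp$ and both $\mawt{a}$ and $\mawt{0}$ are invertible linear maps, their images $\{\raisebf{\gtype}{\asymbol}\}$ and $\{\raisebf{\gtype}{\zsymbol}\}$ are also bases.

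For the coefficient formulas, I would start from $v = \sum_{\gtype} c_{\gtype}\raisebf{\gtype}{m} = \mawt{a}\bigl(\sum_{\gtype} c_{\gtype} \vec{\gtype}\bigr)$, apply $\mawt{a}^{-1} = \frac{1}{\na{1}\cdots\na{\nl}}\mawt{0}$ to both sides, and read off the $\gtype$-component. Because $\mawt{0}$ is self-adjoint (Part \ref{main1}), the $\gtype$th row of $\mawt{0}$ coincides with $\mawt{0}_{\gtype,\xgtype{b}}$ as written in the corollary, yielding
\[
c_{\gtype} = \frac{1}{\na{1}\cdots\na{\nl}}\sum_{\xgtype{b}} \mawt{0}_{\gtype,\xgtype{b}} v_{\xgtype{b}}.
\]
The formula for $s_{\gtype}$ is obtained by the symmetric argument with the roles of $\mawt{a}$ and $\mawt{0}$ swapped.

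There is no real obstacle here: the theorem has already done all the substantive work. The only care required is bookkeeping -- making sure that the self-adjointness from Part \ref{main1} is invoked when translating ``the $\gtype$-component of $\mawt{0}v$'' into the summation $\sum_{\xgtype{b}} \mawt{0}_{\gtype,\xgtype{b}} v_{\xgtype{b}}$ with the indices placed as in the statement -- but since $\mawt{0}$ and $\mawt{a}$ are symmetric matrices this is automatic.
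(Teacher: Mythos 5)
Your proposal is correct and is exactly the argument the paper intends: the paper's own proof simply states that the corollary ``follows immediately from Theorem \ref{mainTheorem} and basis changing methods in linear algebra,'' and you have filled in precisely those basis-changing details (invertibility from Part \ref{main2}, images of a basis under an invertible map form a basis, and reading off coefficients by applying the scaled inverse). No gaps.
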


\begin{proof}
This follows immediately from Theorem \ref{mainTheorem} and basis
changing methods in linear algebra.
\end{proof}

\begin{define}
The \emph{degree} of a genotype $\gtype$ is the number of non-zero
alleles, i.e.\

\begin{equation*}
\deg{\gtype} = \#\{i|\al{i} \ne 0\}
\end{equation*}
\end{define}

\begin{cor}\label{degOrtho}
Let $\gtype$ and $\xgtype{b}$ be two genotypes with different degrees.
Then

\begin{equation*}
\innerp{\raisebf{\gtype}{m}}{\raisebf{\xgtype{b}}{m}} = 0
\end{equation*}
and
\begin{equation*}
\innerp{\raisebf{\gtype}{z}}{\raisebf{\xgtype{b}}{z}} = 0
\end{equation*}
\end{cor}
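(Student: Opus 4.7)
The plan is to reduce the inner product to a single entry of $\mawt{a}^2$ or $\mawt{0}^2$ via self-adjointness, and then read off the vanishing from the explicit block-diagonal formulas \eqref{waSquared} and \eqref{w0Squared} proved in Theorem~\ref{mainTheorem}.

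First I would write
\begin{equation*}
\innerp{\raisebf{\gtype}{m}}{\raisebf{\xgtype{b}}{m}}
= \innerp{\mawt{a}\vec{\gtype}}{\mawt{a}\vec{\xgtype{b}}}
= \innerp{\vec{\gtype}}{\mawt{a}^{*}\mawt{a}\vec{\xgtype{b}}}
= \innerp{\vec{\gtype}}{\mawt{a}^{2}\vec{\xgtype{b}}}
= (\mawt{a}^{2})_{\gtype,\,\xgtype{b}},
\end{equation*}
using Part \ref{main1} of Theorem \ref{mainTheorem} at the third equality, and analogously for the $z$-vectors with $\mawt{0}^{2}$.

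Next I would invoke formula \eqref{waSquared} and inspect its $i$th factor
\begin{equation*}
\delta_{a_i,0}\delta_{b_i,0}
+ (1 - \delta_{a_i,0})(1 - \delta_{b_i,0})(\na{i}\delta_{a_i,b_i} - 1).
\end{equation*}
This factor equals $1$ when $a_i = b_i = 0$, vanishes when exactly one of $a_i, b_i$ is zero, and equals $\na{i}\delta_{a_i,b_i}-1$ when both are nonzero. Hence the full product is nonzero only when $a_i = 0$ if and only if $b_i = 0$ for every $i$. In that case the two genotypes have the same set of nonzero loci, and in particular $\deg{\gtype} = \deg{\xgtype{b}}$. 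The contrapositive gives the first claim. The argument for the $z$-case is identical: formula \eqref{w0Squared} has the $i$th factor equal to $\na{i}$, $0$, or $1 + \delta_{a_i,b_i}$ in the same three sub-cases, so again nonzero product forces matching zero-patterns, hence equal degrees.

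There is no real obstacle here; the work was done in Theorem~\ref{mainTheorem}. The only conceptual point is recognizing that self-adjointness of $\mawt{a}$ and $\mawt{0}$ turns pairwise inner products of the transformed basis vectors into single entries of the squared matrices, whose zero-pattern is already transparent from the factored forms \eqref{waSquared} and \eqref{w0Squared}. The resulting proof can be kept to just a few lines.
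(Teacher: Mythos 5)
Your proposal is correct and matches the paper's own argument: both reduce $\innerp{\raisebf{\gtype}{m}}{\raisebf{\xgtype{b}}{m}}$ to the matrix entry $(\mawt{a}^2)_{\gtype,\xgtype{b}}$ via self-adjointness and then observe that a degree mismatch forces some locus $i$ with exactly one of $a_i, b_i$ zero, killing the $i$th factor in \eqref{waSquared} (and likewise in \eqref{w0Squared}). Your contrapositive phrasing is an equivalent, slightly more detailed version of the same step.
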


\begin{proof}
If $\gtype$ and $\xgtype{b}$ have different degree, then there must a
locus $i$ such that one of $a_i, b_i$ is zero the the other non-zero.
Then, by \eqref{waSquared} and \eqref{w0Squared},
\begin{multline*}
\innerp{\raisebf{\gtype}{m}}{\raisebf{\xgtype{b}}{m}}\\
= \innerp{\mawt{a}^2\vec{\gtype}}{\vec{\xgtype{b}}}\hfil\\
= \prod_{i=1}^\nl
\na{i}\left(\delta_{a_i,0}\delta_{b_i,0}
+ (1 - \delta_{a_i,0})(1 - \delta_{b_i,0})
    (\na{i}\delta_{a_i,b_i} - 1)\right) = 0\hfil
\end{multline*}
and
\begin{multline*}
\innerp{\raisebf{\gtype}{z}}{\raisebf{\xgtype{b}}{z}}\\
 = \innerp{\mawt{0}^2\vec{\gtype}}{\vec{\xgtype{b}}}\hfil\\
 = \prod_{i = 1}^\nl
\left(\na{i} \delta_{a_i,0}\delta_{b_i,0}
+ (1 - \delta_{a_i,0})(1 - \delta_{b_i,0})(1 +
\delta_{a_i,b_i})\right) = 0\hfil
\end{multline*}

\end{proof}

\begin{cor}\label{orthoDecomp}
Define the following subspaces:
\begin{equation*}
S_{i,\zsymbol} = \Span(\{\raisebf{\gtype}{z}|\deg{\gtype} = i\})
\end{equation*}
and
\begin{equation*}
S_{i,\asymbol} = \Span(\{\raisebf{\gtype}{m}|\deg{\gtype} = i\})
\end{equation*}
Then

\begin{enumerate}
\item
$S_{i,\zsymbol} = S_{i,\asymbol}$ for each $i$, and we will denote the
common subspace $S_i$, and
\item
$\Csp = S_0 \oplus S_1 \oplus \cdots \oplus S_n$ is an orthogonal
decomposition.
\end{enumerate}
\end{cor}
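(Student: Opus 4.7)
The plan is to handle part 2 first, since it follows almost immediately from the preceding two corollaries, then dispatch part 1 with a short change-of-basis computation that exploits the matrix formulas for $\mawt{a}^2$ and $\mawt{0}^2$. For part 2 I would observe that Corollary \ref{coeffFinder} makes $\{\raisebf{\gtype}{m}\}_{\gtype}$ a basis of $\Csp$. Partitioning this basis by degree gives $\Csp = S_{0,\asymbol} + \cdots + S_{n,\asymbol}$ as a sum of subspaces, and Corollary \ref{degOrtho} guarantees that the summands are pairwise orthogonal, which forces the sum to be both direct and orthogonal. The identical argument applied to the $z$-basis gives $\Csp = \bigoplus_i S_{i,\zsymbol}$, and the two decompositions coincide once part 1 is established.

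For part 1 I would show $S_{i,\asymbol} \subseteq S_{i,\zsymbol}$; the reverse inclusion then follows by a dimension count, since both subspaces are spanned by a number of linearly independent vectors equal to the count of genotypes of degree $i$. To prove the containment, fix $\gtype$ with $\deg{\gtype} = i$ and use Corollary \ref{coeffFinder} to expand
\[
\raisebf{\gtype}{m} = \sum_{\xgtype{b}} c_{\xgtype{b}}\,\raisebf{\xgtype{b}}{z},
\qquad
c_{\xgtype{b}} = \frac{1}{\na{1}\cdots\na{\nl}}\sum_{\xgtype{c}} (\mawt{a})_{\xgtype{b},\xgtype{c}}(\raisebf{\gtype}{m})_{\xgtype{c}}.
\]
Since $(\raisebf{\gtype}{m})_{\xgtype{c}} = (\mawt{a})_{\xgtype{c},\gtype}$, the inner sum collapses to $(\mawt{a}^2)_{\xgtype{b},\gtype}$, which by formula \eqref{waSquared} contains a factor that vanishes as soon as some locus has exactly one of $a_i, b_i$ equal to $0$. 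In particular $c_{\xgtype{b}} = 0$ unless $\gtype$ and $\xgtype{b}$ have identical zero-support, and hence unless $\deg{\gtype} = \deg{\xgtype{b}}$. Therefore $\raisebf{\gtype}{m} \in S_{i,\zsymbol}$ as required.

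No real obstacle arises. The only substantive observation is that the indicator factors $(1 - \delta_{a_i,0})(1 - \delta_{b_i,0})$ in \eqref{waSquared} enforce matching zero-support between $\gtype$ and $\xgtype{b}$, so that the change-of-basis matrix from the $m$-basis to the $z$-basis is in fact block diagonal with blocks indexed by zero-support patterns, a refinement of the decomposition by degree. One could equally well avoid the dimension count and run the symmetric calculation against \eqref{w0Squared} to obtain $S_{i,\zsymbol} \subseteq S_{i,\asymbol}$ directly.
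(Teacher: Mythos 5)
Your proof is correct, but your argument for part 1 takes a genuinely different route from the paper's. The paper proves $S_{i,\asymbol} = S_{i,\zsymbol}$ by a cross-basis orthogonality computation: for distinct genotypes it evaluates
\begin{equation*}
\innerp{\mawt{0}\vec{\gtype}}{\mawt{a}\vec{\xgtype{b}}}
= \innerp{\mawt{a}\mawt{0}\vec{\gtype}}{\vec{\xgtype{b}}}
= \ngtype\,\innerp{\vec{\gtype}}{\vec{\xgtype{b}}} = 0,
\end{equation*}
so each $S_{i,\zsymbol}$ is orthogonal to every $S_{j,\asymbol}$ with $j \ne i$, and since both families give orthogonal decompositions of $\Csp$ the two gradings must coincide. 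This leans only on the scaled-inverse identity \eqref{scaledInverse} and is essentially a two-line argument. You instead compute the change-of-basis coefficients explicitly via Corollary \ref{coeffFinder}, collapse the inner sum to $(\mawt{a}^2)_{\xgtype{b},\gtype}$, and read off from \eqref{waSquared} that the transition matrix from the $m$-basis to the $z$-basis is block diagonal with blocks indexed by zero-support patterns. This costs a little more bookkeeping (and a dimension count for the reverse inclusion, which is fine since subsets of a basis are independent), but it buys a strictly stronger conclusion: $\raisebf{\gtype}{m}$ lies in the span of those $\raisebf{\xgtype{b}}{z}$ whose zero-support equals that of $\gtype$, a refinement of the decomposition by degree that the paper's inner-product argument does not expose. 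Your treatment of part 2 matches the paper's: it is immediate from Corollaries \ref{coeffFinder} and \ref{degOrtho}.
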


\begin{proof}
That
\begin{equation*}
\Csp = \bigoplus_{i=1}^\nl S_{i,\zsymbol} = \bigoplus_{i=1}^\nl S_{i,\asymbol}
\end{equation*}
follows immediately from Corollary \ref{degOrtho}.

Let $\gtype$ and $\xgtype{b}$ be two genotypes of different degree.
Taking inner products, we have
\begin{align*}\label{ZandZ^2}
&\innerp{\mawt{0} \vec{\gtype}}{\mawt{a} \vec{\xgtype{b}}}\\
&= \innerp{\mawt{a}\mawt{0} \vec{\gtype}}{\vec{\xgtype{b}}}\\
&= \na{1}\cdots\na{\nl}\innerp{\vec{\gtype}}{\vec{\xgtype{b}}} = 0
\end{align*}
This shows that $S_{i,\asymbol} = S_{i,\zsymbol}$ for each $i$,
completing the proof.
\end{proof}

\begin{remark}
In the presentation of the bases derived from $\mawt{a}$ and $\mawt{0}$
(i.e.\ the columns of their matrices when written in the standard
basis), we have avoided any discussion of the appropriate scaling of
these vectors or that of the coefficients corresponding to them,
preferring to express these vectors in integral form.  The magnitude
of any vector in the $\mawt{a}$- or $\mawt{0}$-basis can be computed
directly or by using one of the formulas \eqref{waSquared} or
\eqref{w0Squared}.

It is worth keeping in mind that unlike the biallelic \opname{} matrix, the
columns of $\mawt{a}$ (or the columns of $\mawt{0}$) are not all
mutually orthogonal.
\end{remark}

\section{\Additive{} \phenoscape{}s}

\begin{define}\label{additiveDef}
A \phenoscape{} $v \in \Csp$ is said to be \emph{\additive{}} if for every
allele $a_i$ at locus $i$ there exists an \emph{additive effect} $\phi_v(a_i,i)$ such
that for all genotypes $\gtype$,

\begin{equation*}
v_{\gtype} = \sum_{i = 1}^{\nl} \phi_v(a_i,i)
\end{equation*}
\end{define}

\begin{thm}
Using the notation from Corollary \ref{orthoDecomp}, we have the following
equivalence:
\begin{equation}\label{additiveSpan}
\text{The set of additive \phenoscape{}s}= S_0 \oplus S_1
\end{equation}
\end{thm}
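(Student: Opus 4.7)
The plan is to prove the equality by combining one easy inclusion with a matching dimension count.

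\textbf{Forward inclusion.} The set of additive \phenoscape{}s is clearly a linear subspace of $\Csp$, so it suffices to check that each spanning vector of $S_0$ and $S_1$ (in the $z$-basis, say) is additive. By Example~\ref{example0thOrder}, the spanning vector of $S_0$ is $\sum_{\gtype}\vec{\gtype}$, which represents the constant \phenoscape{} and is trivially additive. For a degree-one genotype with $\al{\lo} = k \ne 0$ and $\al{j} = 0$ for $j \ne \lo$, the $\xgtype{b}$-component of $\raisebf{\gtype}{z}$ equals $\prod_{j=1}^{\nl} \wronma{a_j}{b_j}$; because $\wronma{0}{y} = 1$ for every $y$, only the $j = \lo$ factor varies with $\xgtype{b}$, so the component depends on $b_\lo$ alone, giving an additive \phenoscape{} supported at locus $\lo$. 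The identical collapse works in the $m$-basis since $1 - \na{j}\wronoa{0}{b_j} = 1$.

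\textbf{Dimension match.} By Corollary~\ref{coeffFinder} the vectors $\{\raisebf{\gtype}{z}\}$ form a basis of $\Csp$, so $\dim S_\lo$ equals the number of degree-$\lo$ genotypes; hence
\begin{equation*}
\dim(S_0 \oplus S_1) = 1 + \sum_{\lo=1}^{\nl}(\na{\lo} - 1).
\end{equation*}
On the other hand, consider the linear map sending a tuple of additive effects $(\phi(a_\lo, \lo))_{\lo, a_\lo}$ to the \phenoscape{} $v_{\gtype} = \sum_\lo \phi(a_\lo, \lo)$. Its domain has dimension $\sum_\lo \na{\lo}$, and its kernel consists of tuples for which $\sum_\lo \phi(a_\lo, \lo) = 0$ identically in $(a_1, \ldots, a_{\nl})$; fixing all but one coordinate shows each $\phi(\cdot, \lo)$ must be a constant $c_\lo$ with $\sum_\lo c_\lo = 0$, a subspace of dimension $\nl - 1$. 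Therefore the additive \phenoscape{}s form a subspace of dimension $\sum_\lo \na{\lo} - (\nl - 1) = 1 + \sum_\lo (\na{\lo} - 1)$, matching $\dim(S_0 \oplus S_1)$ and closing the argument.

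\textbf{Anticipated obstacle.} The only step with any content is the forward inclusion, which rests on the observation that the products defining $\mawt{a}$ and $\mawt{0}$ collapse on zero-allele coordinates so that a degree-one spanning vector depends only on a single locus. The rest is bookkeeping --- a clean dimension count on each side, plus appeal to Corollary~\ref{coeffFinder} for linear independence of the $z$-basis within each $S_\lo$.
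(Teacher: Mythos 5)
Your proof is correct, and for the harder inclusion it takes a genuinely different route from the paper. The containment $S_0 \oplus S_1 \subseteq \{\text{additive \phenoscape{}s}\}$ is handled essentially as the paper handles its $\supseteq$ direction: both arguments observe that the product defining $\raisebf{\gtype}{z}$ collapses on zero-allele coordinates (since $\wronma{0}{y}=1$), so the components of a degree-$\le 1$ basis vector depend on at most one locus and the vector is therefore \additive{}. The difference is in the reverse containment. The paper computes, via Corollary \ref{coeffFinder}, the coefficients of a spanning set of \additive{} \phenoscape{}s (the unit elements) in the basis $\{\raisebf{\gtype}{m}\}$ and shows the coefficient at any $\raisebf{\gtype}{m}$ with $\deg{\gtype}\ge 2$ vanishes, because the inner sum $\sum_{b_m}\wronma{a_m}{b_m}$ is zero whenever $a_m\ne 0$. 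You instead count dimensions: rank--nullity applied to the parametrization $(\phi(a_i,i))\mapsto v$ gives the \additive{} subspace dimension $1+\sum_i(\na{i}-1)$, which matches $\dim(S_0\oplus S_1)$ as read off from the linear independence of the $z$-basis guaranteed by Corollary \ref{coeffFinder}, so the one verified inclusion forces equality. Your kernel computation (each $\phi(\cdot,i)$ constant, with the constants summing to zero, giving nullity $\nl-1$) is the only delicate point and it is handled correctly. The trade-off: the paper's route is constructive and yields the biologically meaningful fact that an \additive{} \phenoscape{} has vanishing coefficients on all basis vectors of \degree{} $\ge 2$, i.e.\ no higher-order epistatic terms, whereas your route is shorter and avoids the coefficient computation at the cost of that explicit information.
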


\begin{proof}
$\subseteq$:
We use the term \emph{unit element for allele $c_k$ at locus $k$} to
denote the
additive \phenoscape{} $u$ with the property
\begin{equation*}
\phi_u(b_j,j) = \delta_{(c_k,k), (b_j,j)}
\end{equation*}
for any allele $b_j$ and locus $j$.
It should be clear from Definition \ref{additiveDef} that
\begin{equation*}
\Span(\text{unit elements}) = \text{set of additive \phenoscape{}s}
\end{equation*}
Let $u \in \Csp$ be the unit element corresponding to the allele-locus
pair $(c_k,k)$, and let $\gtype$ be a genotype with degree $\ge
2$. Then there must be a locus $m \ne k$ where $a_m \ne 0$.

We compute the coefficient of $u$ at $\raisebf{\xgtype{b}}{m}$ using
Corollary
\ref{coeffFinder} and ignoring the factor $1/\na{i}\cdots\na{\nl}$.
\begin{align*}
\sum_{\xgtype{b}}&\mawt{0}_{\gtype,\xgtype{b}} u_{\xgtype{b}}\\
&= \sum_{\xgtype{b}, b_k = c_k}
\prod_{i = 1}^n \wronma{a_i}{b_i}\\
&= \sum_{\brokengtype{b}{m},b_k = c_k}
    \prod_{i \ne
    m}\wronma{a_i}{b_i}\sum_{b_m}\wronma{a_m}{b_m}&&
    \text{(Since $m \ne k$)}\\
&= 0&&\text{(Since $a_m \ne 0$)}
\end{align*}
Thus the coefficients of $u$ in the basis
$\{\raisebf{\gtype}{m}\}_{\gtype}$ are non-zero only for basis elements
with degree $\le 1$, proving the inclusion.

\medskip
\noindent
$\supseteq$:  From Corollary \ref{orthoDecomp}, $S_0$ is spanned by
$\raisebf{0|\cdots|0}{z}$, and from \eqref{mawt0},
\begin{equation*}
\raisebf{0|\cdots|0}{z} = \sum_{\gtype} \vec{\gtype}.
\end{equation*}
This is clearly the additive \phenoscape{} with effect $1/\nl$ for all
alleles.

From Corollary \ref{orthoDecomp}, $S_1$ is spanned by vectors which have
the form\break$\raisebf{0|\cdots|a_j|\cdots|0}{z}$, where $a_j \ne 0$.
From \eqref{mawt0}, we have
\begin{align*}
[0|\cdots|&a_j|\cdots|0]^{\bf \zsymbol}\\
&= \sum_{\xgtype{b}}
\wronma{a_j}{b_j}
\wronma{0}{b_j}^{\nl - 1}
\vec{\xgtype{b}}\\
&= \sum_{\xgtype{b}}
\wronma{a_j}{b_j} \vec{\xgtype{b}}&\text{(Since $\wronma{0}{b_j} = 1$)}\\
&= \sum_{b_1|\cdots|b_{j-1}|0|b_{j+1}|\cdots|b_n} \vec{\xgtype{b}}\\
&\hskip 2em - \sum_{b_1|\cdots|b_{j-1}|a_j|b_{j+1}|\cdots|b_n}
\vec{\xgtype{b}}
\end{align*}
This is an additive \phenoscape{} with effect $1/\nl$ for all alleles at
loci $\ne j$, effect $1/\nl$ for the 0 allele at locus $j$, effect $(1 -
2\nl)/\nl$ for allele $a_j$ at loci $j$, and $(1-\nl)/\nl$ at all other
alleles at locus $j$.

This completes the proof.
\end{proof}

\section{Further examples}

In this \final{} section, we continue in the spirit of Examples
\ref{example1stOrder} and \ref{example0thOrder}
and interpret basis vectors in the form $\raisebf{\gtype}{m}$ or
$\raisebf{\gtype}{z}$ as linear forms and give an interpretation in the
language of interacting effects.

Consider the case with two loci where each has three alleles.
The following are the corresponding matrices for $\mawt{a}$ and
$\mawt{0}$.

\begin{equation}\label{mawtaExampleConcrete}
\mawt{a} =
\begin{blockarray}{rrrrrrrrrr}
& 0|0 & 0|1 & 0|2 & 1|0 & 2|0 & 1|1 & 1|2 & 2|1 & 2|2\\
\begin{block}{r(rrrrrrrrr)}
0|0 & 1 & 1 & 1 & 1 & 1 & 1 & 1 & 1 & 1\\
0|1 & 1 & -2 & 1 & 1 & 1 & -2 & 1 & -2 & 1\\
0|2 & 1 & 1 & -2 & 1 & 1 & 1 & -2 & 1 & -2\\
1|0 & 1 & 1 & 1 & -2 & 1 & -2 & -2 & 1 & 1\\
2|0 & 1 & 1 & 1 & 1 & -2 & 1 & 1 & -2 & -2\\
1|1 & 1 & -2 & 1 & -2 & 1 & 4 & -2 & -2 & 1\\
1|2 & 1 & 1 & -2 & -2 & 1 & -2 & 4 & 1 & -2\\
2|1 & 1 & -2 & 1 & 1 & -2 & -2 & 1 & 4 & -2\\
2|2 & 1 & 1 & -2 & 1 & -2 & 1 & -2 & -2 & 4\\
\end{block}
\end{blockarray}
\end{equation}

\begin{equation}\label{mawt0ExampleConcrete}
\mawt{0} =
\begin{blockarray}{rrrrrrrrrr}
& 0|0 & 0|1 & 0|2 & 1|0 & 2|0 & 1|1 & 1|2 & 2|1 & 2|2\\
\begin{block}{r(rrrrrrrrr)}
0|0 & 1 & 1 & 1 & 1 & 1 & 1 & 1 & 1 & 1\\
0|1 & 1 & -1 & 0 & 1 & 1 & -1 & 0 & -1 & 0\\
0|2 & 1 & 0 & -1 & 1 & 1 & 0 & -1 & 0 & -1\\
1|0 & 1 & 1 & 1 & -1 & 0 & -1 & -1 & 0 & 0\\
2|0 & 1 & 1 & 1 & 0 & -1 & 0 & 0 & -1 & -1\\
1|1 & 1 & -1 & 0 & -1 & 0 & 1 & 0 & 0 & 0\\
1|2 & 1 & 0 & -1 & -1 & 0 & 0 & 1 & 0 & 0\\
2|1 & 1 & -1 & 0 & 0 & -1 & 0 & 0 & 1 & 0\\
2|2 & 1 & 0 & -1 & 0 & -1 & 0 & 0 & 0 & 1\\
\end{block}
\end{blockarray}
\end{equation}

The block diagonality in Theorem \ref{mainTheorem}, including the
orthogonal decomposition with respect to degree asserted in Corollary
\ref{orthoDecomp} is seen when we compute $\mawt{a}^2$ and $\mawt{0}^2$.

\begin{equation*}
\mawt{a}^2 =
\begin{blockarray}{rrrrrrrrrr}
& 0|0 & 0|1 & 0|2 & 1|0 & 2|0 & 1|1 & 1|2 & 2|1 & 2|2\\
\begin{block}{r(rrrrrrrrr)}
0|0 & 9 & 0 & 0 & 0 & 0 & 0 & 0 & 0 & 0\\
0|1 & 0 & 18 & -9 & 0 & 0 & 0 & 0 & 0 & 0\\
0|2 & 0 & -9 & 18 & 0 & 0 & 0 & 0 & 0 & 0\\
1|0 & 0 & 0 & 0 & 18 & -9 & 0 & 0 & 0 & 0\\
2|0 & 0 & 0 & 0 & -9 & 18 & 0 & 0 & 0 & 0\\
1|1 & 0 & 0 & 0 & 0 & 0 & 36 & -18 & -18 & 9\\
1|2 & 0 & 0 & 0 & 0 & 0 & -18 & 36 & 9 & -18\\
2|1 & 0 & 0 & 0 & 0 & 0 & -18 & 9 & 36 & -18\\
2|2 & 0 & 0 & 0 & 0 & 0 & 9 & -18 & -18 & 36\\
\end{block}
\end{blockarray}
\end{equation*}

\begin{equation*}
\mawt{0}^2 =
\begin{blockarray}{rrrrrrrrrr}
& 0|0 & 0|1 & 0|2 & 1|0 & 2|0 & 1|1 & 1|2 & 2|1 & 2|2\\
\begin{block}{r(rrrrrrrrr)}
0|0 & 9 & 0 & 0 & 0 & 0 & 0 & 0 & 0 & 0\\
0|1 & 0 & 6 & 3 & 0 & 0 & 0 & 0 & 0 & 0\\
0|2 & 0 & 3 & 6 & 0 & 0 & 0 & 0 & 0 & 0\\
1|0 & 0 & 0 & 0 & 6 & 3 & 0 & 0 & 0 & 0\\
2|0 & 0 & 0 & 0 & 3 & 6 & 0 & 0 & 0 & 0\\
1|1 & 0 & 0 & 0 & 0 & 0 & 4 & 2 & 2 & 1\\
1|2 & 0 & 0 & 0 & 0 & 0 & 2 & 4 & 1 & 2\\
2|1 & 0 & 0 & 0 & 0 & 0 & 2 & 1 & 4 & 2\\
2|2 & 0 & 0 & 0 & 0 & 0 & 1 & 2 & 2 & 4\\
\end{block}
\end{blockarray}
\end{equation*}

We saw in Examples \ref{example1stOrder} and
\ref{example0thOrder} how some of the basis vectors generated by
$\mawt{a}$ and $\mawt{0}$ gave rise to zeroth and first order
linear forms. Having all the vectors in column form in
\eqref{mawtAExample} and \eqref{mawt0Example} allows us to
compute, for any \phenoscape{} $v$,

\begin{equation}\label{exampleForm2-1}
\innerp{v}{\raisebf{1|1}{\asymbol}}
= v_{00} - 2v_{01} + v_{02} - 2v_{10} + v_{20} + 4v_{11} - 2v_{12} - 2v_{21} + v_{22}
\end{equation}
and
\begin{equation}\label{exampleForm2-2}
\innerp{v}{\raisebf{1|1}{\zsymbol}}
= v_{00} - v_{01} - v_{10} + v_{11}
\end{equation}

If we compare these forms to the analogous form arising from the
biallelic \wt{}, i.e.\
\begin{equation*}
u_{00} - u_{01} - u_{10}+ u_{11}
\end{equation*}
we see that \eqref{exampleForm2-2} is essentially the same.

The form \eqref{exampleForm2-1} is a little trickier to interpret.  We
can start with the difference between $v_{11}$ and the average of its
alternatives which differ at both loci:

\begin{equation}\label{bulk}
v_{11} - \frac{1}{4}(v_{00} + v_{02} + v_{20} + v_{22})
\end{equation}
The averaged first order effects of replacing an allele with allele 1 in
a single
locus (cf.\ Example \ref{example1stOrder}) are given by

\begin{align}
** \rightarrow *1:&\quad\frac{1}{3}(v_{01} + v_{11} + v_{21})
    - \frac{1}{6}(v_{00} + v_{02} + v_{10} + v_{20} + v_{12} + v_{22})
    \label{firstOrderExample1}\\
** \rightarrow 1*:&\quad\frac{1}{3}(v_{10} + v_{11} + v_{12})
    - \frac{1}{6}(v_{00} + v_{01} + v_{02} + v_{20} + v_{21} + v_{22})
    \label{firstOrderExample2}
\end{align}

If we want the ``pure'' second order effect of substituting `11' for one
of `00', `02', `20', or `22', we should subtract
\eqref{firstOrderExample1} and \eqref{firstOrderExample2} from
\eqref{bulk}.  When we do so, we obtain one twelfth of
\eqref{exampleForm2-1}, a useful interpretation of that form.

\bibliographystyle{alpha}

\bibliography{mal_walsh}
\end{document}